\documentclass[prl,twocolumn,altaffilletter,numerical,flushbottom,secnumarabic,superscriptaddress,floatfix]{revtex4-2}
\usepackage[utf8]{inputenc}
\usepackage[american]{babel}
\usepackage[T1]{fontenc}
\usepackage[pdftex]{graphicx}  
\usepackage{xcolor}
\usepackage{dcolumn}
\usepackage{comment}
\usepackage{bm}
\usepackage{dsfont}
\usepackage{amsthm}
\usepackage{mathrsfs}
\usepackage{bbold}
\usepackage{braket}
\usepackage{wrapfig}
\usepackage{mathtools}
\usepackage{float}
\usepackage{orcidlink}
\usepackage{physics}
\usepackage[normalem]{ulem}
\usepackage{etoolbox}
\usepackage{comment}

\usepackage{hyperref}
\hypersetup{colorlinks,bookmarksopen,bookmarksnumbered,citecolor=red,linkcolor=red,urlcolor=red}

\newcommand{\Id}{\mathbb{1}}

\newcommand{\MNL}{M_2^{\mathrm{NL}}}

\newcommand{\I}{\mathbb I}

\newtheorem{theorem}{Theorem}
\newtheorem{lemma}[theorem]{Lemma}

\theoremstyle{definition}

\theoremstyle{remark}

\newcommand{\RR}{\mathbb{R}}
\newcommand{\pf}{\operatorname{pf}}

\begin{document}

\title{Nonlocal nonstabilizerness in free fermion models}

\author{Mario Collura\,\orcidlink{0000-0003-2615-8140}}
\email{mcolura@sissa.it}
\affiliation{International School for Advanced Studies (SISSA), via Bonomea 265, 34136 Trieste, Italy}
\affiliation{INFN Sezione di Trieste, 34136 Trieste, Italy}
\author{Benjamin B\'eri\,\orcidlink{0000-0001-9933-9108}}
\email{bfb26@cam.ac.uk}
\affiliation{T.C.M. Group, Cavendish Laboratory, University of Cambridge, J.J. Thomson Avenue, Cambridge, CB3 0HE, UK\looseness=-1}
\affiliation{DAMTP, University of Cambridge, Wilberforce Road, Cambridge, CB3 0WA, UK}
\author{Emanuele Tirrito\,\orcidlink{0000-0001-7067-1203}}
\email{emanuele.tirrito@epfl.ch}
\affiliation{Institute of Physics, Ecole Polytechnique Fédérale de Lausanne (EPFL), CH-1015 Lausanne, Switzerland}
\affiliation{Center for Quantum Science and Engineering, Ecole Polytechnique Fédérale de Lausanne (EPFL), CH-1015 Lausanne, Switzerland}

\begin{abstract}
Nonlocal magic quantifies the irreducible nonstabilizerness of a bipartite quantum state after optimizing over local basis changes. We study nonlocal magic for pure fermionic Gaussian states, and derive a simple closed-form entanglement spectrum bound in terms of the singular values of the subsystem-restricted covariance matrix. We benchmark our result against simulated annealing over local Gaussian unitary transformations, which supports optimality along the full local Gaussian orbit. For states drawn from the Gaussian Haar ensemble, we show that the average nonlocal magic is extensive and determine its thermodynamic limit using random matrix theory for the appropriate circular unitary ensemble. 
We also study Gaussian ground states, focusing on the Kitaev chain, and find that nonlocal magic is suppressed deep in both trivial and topological phases and peaks near the critical points.
Finally, we investigate Gaussian evolution via random circuits and in quenches with the XY chain. For random circuits, we find that nonlocal magic grows diffusively, while in the XY chain the XX limit reveals a striking separation between nonlocal magic and entanglement. 
\end{abstract}

\maketitle

\paragraph{Introduction.---}
It is increasingly recognized that entanglement---one of the essential resources for quantum processing and quantum technology~\cite{shor_1997,RevModPhys.82.277,cirac2012goals,RevModPhys.76.1037}---by itself is insufficient for quantum advantage~\cite{PhysRevLett.91.147902,PhysRevA.75.012337,daley2022practical,bravyi2018quantum,harrow2017quantum,preskill2018quantum,preskill2012quantum,RevModPhys.95.035001,RevModPhys.91.025001}. 
For instance, one can obtain highly entangled so-called stabilizer states by Clifford circuits \cite{nielsen2010quantum} that can be efficiently simulated classically \cite{PhysRevA.57.127,PhysRevA.70.052328,PhysRevLett.112.240501}. Stabilizer states also provide important building blocks in the construction of quantum-error-correcting codes \cite{PhysRevLett.102.110502,PhysRevA.55.900,PhysRevA.54.4741,PhysRevLett.84.2525,PhysRevX.6.031006,PhysRevB.103.104306,PhysRevA.65.012308,PhysRevLett.77.198,PhysRevA.54.1862,yi2024complexity}.

Building on this understanding, a new metric, referred to as ``magic,'' or nonstabilizerness, was introduced to quantify the amount of non-Clifford resources necessary to prepare a given quantum state, providing a more nuanced measure of a state’s quantumness \cite{PhysRevA.71.022316,PhysRevA.86.052329,campbell2017roads,PhysRevLett.116.250501,PhysRevX.6.021043,garcia2017geometry,veitch2014resource, gottesman1998heisenberg,gottesman1998fault}.
Magic has since been connected to a number of fundamentally quantum phenomena \cite{PhysRevA.83.032317,PhysRevX.6.021043,beverland2020lower,PhysRevLett.128.050402,PhysRevA.107.022429}.

The theory of nonstabilizerness in many-body settings has seen rapid development recently \cite{PhysRevB.103.075145,ellison2021symmetry,PhysRevB.106.125130,sarkar2020characterization,PRXQuantum.3.020333}, and much effort has been put into finding corresponding computable measures of magic \cite{PRXQuantum.3.020333,PhysRevA.106.042426,PhysRevB.107.035148,sonya2025nonstabilizerness}. An emerging theme is that nonstabilizerness could be a useful tool to characterize
many-body states and phases of matter \cite{PRXQuantum.4.040317,PhysRevB.111.L081102,1tyr-rlbb,ylsz-dm3y,4hpw-6mq3,PhysRevLett.131.180401}, as
well as quantum dynamics \cite{turkeshi2025magic,p7xt-s9nz,1jzy-sk9r,xfp5-hhs4,tirrito2025universal,PRXQuantum.6.010345,PhysRevA.108.042408,PhysRevA.107.022429,PhysRevA.109.032403,PhysRevB.111.054301,10.21468/SciPostPhys.19.6.159,sierant_2026,braccia_2026,PRXQuantum.5.030332,PhysRevA.108.042407}. Several works have
also investigated the interplay between nonstabilizerness and other physical properties, most notably entanglement~
\cite{PhysRevA.109.L040401,PhysRevA.108.042408,PRXQuantum.6.020324,PhysRevResearch.6.L042030,PhysRevB.110.045101,tarabunga2025magic,PRXQuantum.5.030332}.

The relation to entanglement raises the fundamental question: how much of magic is \emph{nonlocal}? For a bipartite pure state $|\psi\rangle_{AB}$, nonlocal magic is the residual nonstabilizerness that remains after minimizing magic through local basis changes~\cite{cao_2024, cao_2025, qian_2025}.
This quantity isolates the irreducible component of magic that cannot be removed within the two subsystems separately, and therefore captures genuinely nonlocality in magic. However, the optimization is generically formidable: magic measures are nonlinear, the local-unitary orbit is exponentially large, and even evaluating magic itself is typically hard beyond small systems. As a result, explicit results for nonlocal magic are scarce, especially in many-body settings.
 
In this work, we study nonlocal magic for pure fermionic Gaussian states $|\psi\rangle_{AB}$ \cite{fradkin2013field,giuliani2008quantum} in terms of the $\alpha$-stabilizer R\'enyi entropy ($\alpha$-SRE)~\cite{PhysRevLett.128.050402,haug2023stabilizer,PhysRevA.110.L040403}. 
We show that nonlocal magic admits closed-form upper bound $M_{\alpha,>}^\text{NL}(\psi)$---which we believe to be tight
---in terms of covariance-matrix invariants related to the entanglement spectrum.
This is analogous to entanglement-spectrum bounds on nonlocal $2$-SRE for general pure states~\cite{cao_2025}, but while these are intricate, our focus on Gaussian states allows us to obtain simple and efficiently computable expressions, for all $\alpha$.  Concretely, $M_{\alpha,>}^\text{NL}(\psi)$ reduces to a sum of independent mode contributions labeled by the singular values of the subsystem covariance matrix \(i\Gamma_{AA}\). 

We benchmark our analytical result against simulated annealing over local Gaussian unitary transformations, and find numerical evidence suggesting that it captures the optimum along the local Gaussian orbit. We then study the Gaussian Haar ensemble and show that the average $M_{\alpha,>}^\text{NL}(\psi)$ is extensive, and provide a closed-form expression in the thermodynamic limit. Finally, we study ground states and dynamics of local free-fermion systems: for the ground state of the Kitaev chain, we find that$M_{\alpha,>}^\text{NL}(\psi)$ is strongly suppressed deep in both trivial and topological phases, while it is enhanced near the critical points; for brick-wall matchgate circuits we observe diffusive growth of $M_{\alpha,>}^\text{NL}(\psi)$; for quenches in the XY chain we show that the while the dynamics of entanglement and $M_{\alpha,>}^\text{NL}(\psi)$ are generically similar, they become qualitatively distinct at the U($1$)-symmetric XX limit.

\paragraph{Nonstabilizernees for free fermions.--}
We consider a quantum system of $L$ qubits, with total Hilbert space dimension $D=2^L$. Onsite Pauli operators are  $P_i\in\lbrace I,X,Y,Z \rbrace$ for $i=1,2,\cdots L$, while $\textbf{P}\in \left\lbrace I, X, Y, Z \right\rbrace^{\otimes L} $ represents their generic tensor product  (i.e. a Pauli string). Majorana operators are defined using the  Jordan-Wigner mapping: $\gamma_{2i-1} = Z_1 \cdots Z_{i-1} X_i I_{i+1} \cdots I_{L}$ and $\gamma_{2i} = Z_1 \cdots Z_{i-1} Y_i I_{i+1} \cdots I_{L}$. The $2L$ Majorana operators are Hermitian and satisfy the canonical anti-commutation relations $\lbrace \gamma_{\mu},\gamma_{\nu}\rbrace=2\delta_{\mu \nu}$. Majorana monomials are in one-to-one correspondence with Pauli strings, forming a complete orthogonal basis as well, since $\text{Tr}[(\gamma^{\textbf{x}'})^{\dagger}\gamma^{\textbf{x}}]=D \delta_{\textbf{x}',\textbf{x}}$, where $\textbf{x}\in \lbrace 0,L \rbrace^{2L}$. A unitary operator $U$ is Gaussian if and only if it acts on the Majorana operators by rotationg the vector of $\gamma_{\mu}$ as $U^{\dagger} \gamma_{\mu} U= \sum_{\nu} O_{\mu \nu} \gamma_{\nu}$ with $O \in SO(2L)$.
Gaussian states $\rho_{\Gamma}$ are fully characterized by the real skew-symmetric covariance matrix~\cite{surace2022fermionic,hackl_2021}
\begin{equation}
    \Gamma_{\mu \nu} =-\frac{i}{2} \mathrm{Tr}([\gamma_{\mu},\gamma_{\nu}]\rho_{\Gamma})
\end{equation}
which contains all two-point Majorana correlation functions. When applying a Gaussian unitary to a state $\rho_{\Gamma}$, the covariance matrix rotates accordingly as $\Gamma \to O \Gamma O^{T}$. 

Wick’s theorem enables the calculation of any correlation function in terms of two-point correlators and can be formulated as follows $\mathrm{Tr} \left( \rho \gamma^{\boldsymbol{x}} \right)=i^{\frac{|\boldsymbol{x}|}{2}} \mathrm{Pf}\left[\Gamma|_{\boldsymbol{x}} \right]$,
where $\Gamma|_{\boldsymbol{x}}$ denotes the square sub-matrix of $\Gamma$ that includes all rows and columns associated with indices equal to $1$ in $\boldsymbol{x}$, and $|\boldsymbol{x}$| represents the total count of such indices.

Now we can write the SRE \cite{PhysRevLett.128.050402} in terms of minors of the covariance matrix. For a generic state $\rho$, let us introduce the probability density of Pauli strings, $\pi_{\rho}(\boldsymbol{P})=|\mathrm{Tr}(\rho \boldsymbol{P})|^2/2^L$, using which $\alpha$-SRE is
\begin{equation}
    M_{\alpha}(\rho)
    =
    \frac{1}{1-\alpha}
    \log \sum_{\boldsymbol{P}}
    \pi_{\rho}(\boldsymbol{P})^{\alpha}
    - \log D .
\end{equation}
Thus, up to the additive constant $-\log D$, the SRE coincides with the  $\alpha$-Rényi entropy of $\pi_{\rho}(\boldsymbol{P})$. 
For fermionic Gaussian states, $\pi_{\rho}(\boldsymbol{P})$ can be evaluated using the covariance matrix: Majorana monomials $\gamma^{\boldsymbol{x}}$ are in one-to-one correspondence with Pauli strings, so that $\pi_{\rho}(\boldsymbol{P})$ can equivalently be represented as a distribution $\pi_{\Gamma}(\boldsymbol{x})$ over Majorana strings. 
Using Wick's theorem and the identity $\mathrm{Pf}(A)^2=\det A$, one obtains $\pi_{\Gamma}(\boldsymbol{x})=\frac{\det[\Gamma|_{\boldsymbol{x}}]}
    {\det[\mathbb{I}+\Gamma]}$.
Here $\Gamma|_{\boldsymbol{x}}$ denotes the principal submatrix of $\Gamma$ selected by the support of $\boldsymbol{x}$, with the convention that odd-dimensional minors vanish. The denominator follows from the principal-minor expansion 
$\det(\mathbb{I}+\Gamma)=\sum_{\boldsymbol{x}}\det[\Gamma|_{\boldsymbol{x}}]$ and normalizes the distribution; equivalently, it is $D\,\mathrm{Tr}(\rho_{\Gamma}^{2})$, and reduces to $D$ for pure Gaussian states. Therefore, for a Gaussian state the SRE \cite{collura2026non} is 
\begin{equation}\label{eq:MGam}
    M_{\alpha}(\Gamma)=\frac{1}{1-\alpha} \log \sum_{\boldsymbol{x}} \left( \frac{\det[\Gamma|_{\boldsymbol{x}}]}{\det[\I+\Gamma]} \right)^{\alpha} -\log D .
\end{equation}

\paragraph{Nonlocal nonstabilizerness for free fermions.--} We now turn to the bipartite setting. For a pure state $|\psi\rangle$, the nonlocal magic, in terms of the SRE shared between subsystems $A$ and $B$, is~\cite{cao_2024, cao_2025, qian_2025}
\begin{equation}
     M_\alpha^{\rm NL}(\psi)= \min_{U_A\otimes U_B}
M_\alpha\!\left[(U_A\otimes U_B)|\psi\rangle\right],
\end{equation}
where the unitary $U_{X}$ acts on subsystem $X$.
For generic many-body states, finding $M_\alpha^{\rm NL}(\psi)$ is highly nontrivial. For pure fermionic Gaussian states $\ket{\psi_\Gamma}$, however, it becomes more tractable. We take $A$ to have $m$ and $B$ to have $n$ qubits (i.e., $2n$ and $2m$ Majorana fermions and $m+n=L$) and consider the block decomposition
\begin{equation}
\Gamma=
\begin{pmatrix}
\Gamma_{AA} & \Gamma_{AB}\\
-\Gamma_{AB}^T & \Gamma_{BB}
\end{pmatrix},
\qquad
\Gamma^T=-\Gamma,\qquad \Gamma^2=-\mathbb{I}.
\end{equation}
Local Gaussian unitaries act as $\Gamma\mapsto (O_A\oplus O_B)\Gamma(O_A\oplus O_B)^T $ with $O_A\in SO(2m)$ and $O_B\in SO(2n)$, and there exist 
$O^*_{A/B}$ that rotate $\Gamma$ to the bipartite canonical form (assuming $m \leq n$)~\cite{botero_2004,SchuchBauer_prb19,Bejan_MCDT}
\begin{equation} \label{eq:gamma_can}
\Gamma_{\mathrm{can}}:=(O_A^\star\oplus O_B^\star)\Gamma(O_A^\star\oplus O_B^\star)^T
=
\bigoplus_{k=1}^{m}\Gamma_k
\oplus
\bigoplus_{\ell=1}^{n-m}J_\ell,
\end{equation}
where
\begin{equation} 
\Gamma_{k}=
\begin{pmatrix}
\nu_k J_\ell & \mu_k Z\\
-\mu_k Z & \nu_k J_\ell
\end{pmatrix},
\qquad
\mu^2_k=1-\nu_k^2,
\end{equation}
with $ J_\ell=i Y$. The numbers \(\{\nu_k\}=\mathrm{spec}_+(i\Gamma_{AA})\) set the entanglement spectrum~\cite{Fidkowski_PhysRevLett.104.130502}. The state $\ket{\psi_{\Gamma_{\mathrm{can}}}}$ consists of independent elementary \(A\)-\(B\) entangled pairs plus trivial local modes.  As we argue below, a useful upper bound $M_{\alpha,>}^\text{NL}(\psi_\Gamma)$ on $M_{\alpha}^\text{NL}(\psi_\Gamma)$ is $M_{\alpha}(\Gamma_{\mathrm{can}})$, hence, by Eq.~\eqref{eq:MGam},   
\begin{equation} \label{eq:MNLG_main}
M_{\alpha,>}^\text{NL}(\psi_\Gamma) =  \frac{1}{1-\alpha}\sum_k \log\!\left( (1+\nu_k^{2\alpha}+\mu_k^{2\alpha})/2\right).
\end{equation}
This is one of our key results. While entanglement-spectrum bounds for
$M_{2}^\text{NL}$ are generally involved~\cite{cao_2025}, focusing on
Gaussian states allows us to obtain simple expressions for $M_{\alpha,>}^\text{NL}(\psi_\Gamma)$ for all $\alpha$. In this case, the bound reduces to an additive mode-resolved expression fixed solely
by the single-particle entanglement spectrum \(\{\nu_k\}\), 
which vanishes both for locally pure modes
$(\nu_k=1)$ and for maximally entangled stabilizer-like Gaussian pairs $(\nu_k=0)$. Thus $M_{\alpha,>}^{\rm NL}$
probes only intermediate modes, $0<\nu_k<1$, isolating the nonstabilizer component of Gaussian correlations across the cut.

To interpret Eq.~\eqref{eq:MNLG_main}, consider the following preparation
of $\ket{\psi_{\Gamma_{\mathrm{can}}}}$: start with $\otimes_j \ket{0}_j$, corresponding to $\Gamma_\text{vac} = \oplus_{k=1}^{L} i Y$, i.e., $\nu_k=1$ for all $k$. To entangle the $k^\text{th}$ pair, initially in $\ket{0_{k_A}}\ket{0_{k_B}}$, we apply $\exp(i\phi_k X_{k_A}X_{k_B})$ which sets $\nu_k=\cos(2\phi_k)$. After entangling the $m$ pairs, we have (up to the decoupled $\otimes_{j=1}^{n-m}\ket{0}_j$) a tensor product of $\ket{\overline{\psi}_k}=\cos(\phi_k)\ket{\overline{0}_k}+i\sin(\phi_k)\ket{\overline{1}_k}$ where $\ket{\overline{j}_k}=\ket{j_{k_A}}\ket{j_{k_B}}$ ($j\in\{0,1\})$. For $\phi\neq n\pi/4$ (with $n\in \mathbb{Z}$) the gates introduce magic and in Eq.~\eqref{eq:MNLG_main} we recognize the $\alpha$-SRE from the product of the $\ket{\overline{\psi}_k}$, i.e., the sum of the corresponding single-qubit $\alpha$-SRE~\cite{PhysRevLett.128.050402}.

Since $\ket{0_{k_A}}\ket{0_{k_B}}$ is a stabilizer state, the magic in $\ket{\psi_{\Gamma_{\mathrm{can}}}}$ is purely nonlocal: it came to be solely via entanglement. 
Furthermore, being a two-qubit repetition code logical state, each $\ket{\overline{\psi}_k}$ has $\pi_{\rho}(\boldsymbol{P})\neq 0$ for the smallest number (i.e., six) of $\boldsymbol{P}$ compatible with entanglement and nonstablizerness. This suggests that local  unitaries cannot localize $\pi_{\rho}(\boldsymbol{P})$, hence reduce $\alpha$-SRE, further, thus
$M_{\alpha,>}^\text{NL}(\psi_\Gamma)= M_{\alpha}^\text{NL}(\psi_\Gamma)$
for all $\alpha>1$. While we can guarantee only $M_{\alpha,>}^\text{NL}(\psi_\Gamma)\geq M_{\alpha,\text{G}}^\text{NL}(\psi_\Gamma) \geq M_{\alpha}^\text{NL}(\psi_\Gamma)$, where $M_{\alpha,\text{G}}^\text{NL}(\psi_\Gamma)$ results from minimizing only via Gaussian unitaries, the upper bound $M_{\alpha,>}^\text{NL}(\psi_\Gamma)$ may thus be tight.

In the Supplemental Material~\cite{supp_material}, we show that $M_{\alpha,>}^\text{NL}(\psi_\Gamma)$ is a local minimum of $M_{\alpha,\text{G}}^\text{NL}(\psi_\Gamma)$.

\begin{figure}[t!]
\includegraphics[width=0.24\textwidth]{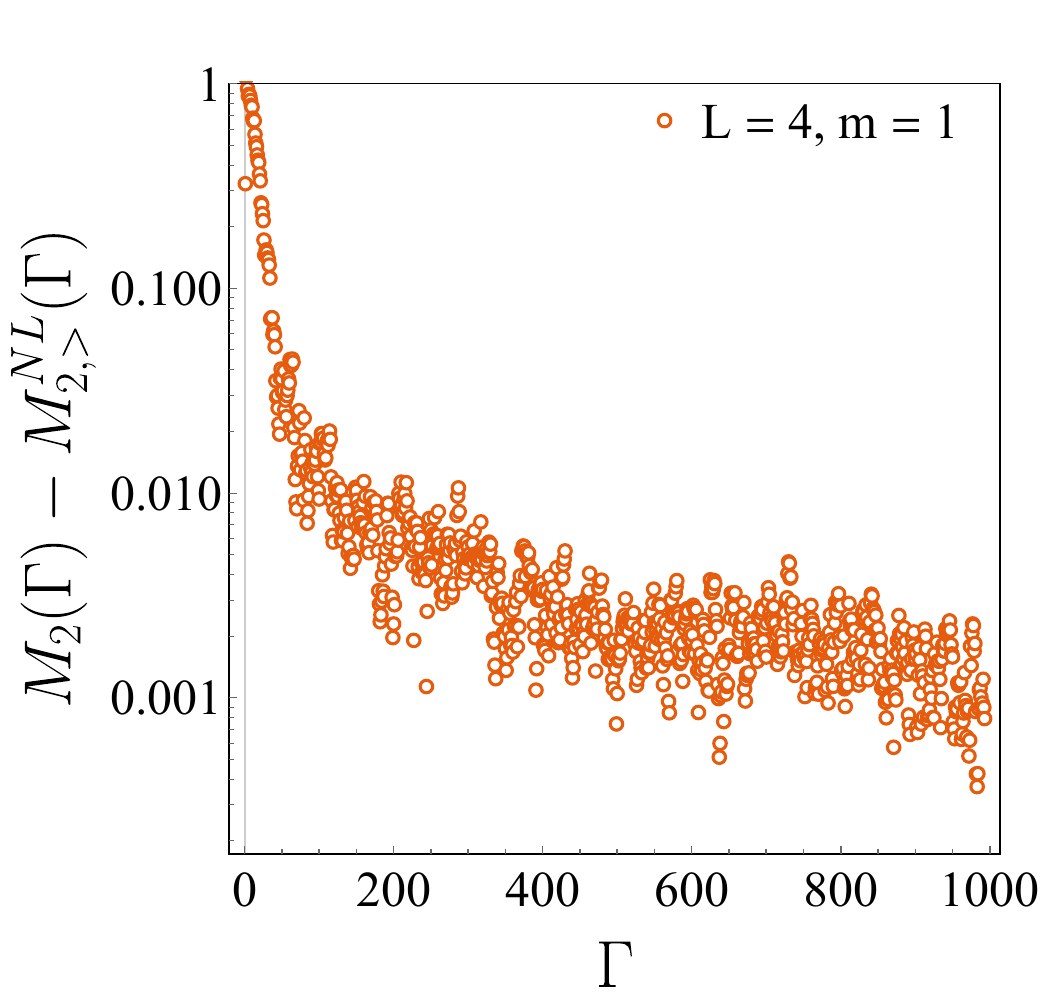}\includegraphics[width=0.24\textwidth]{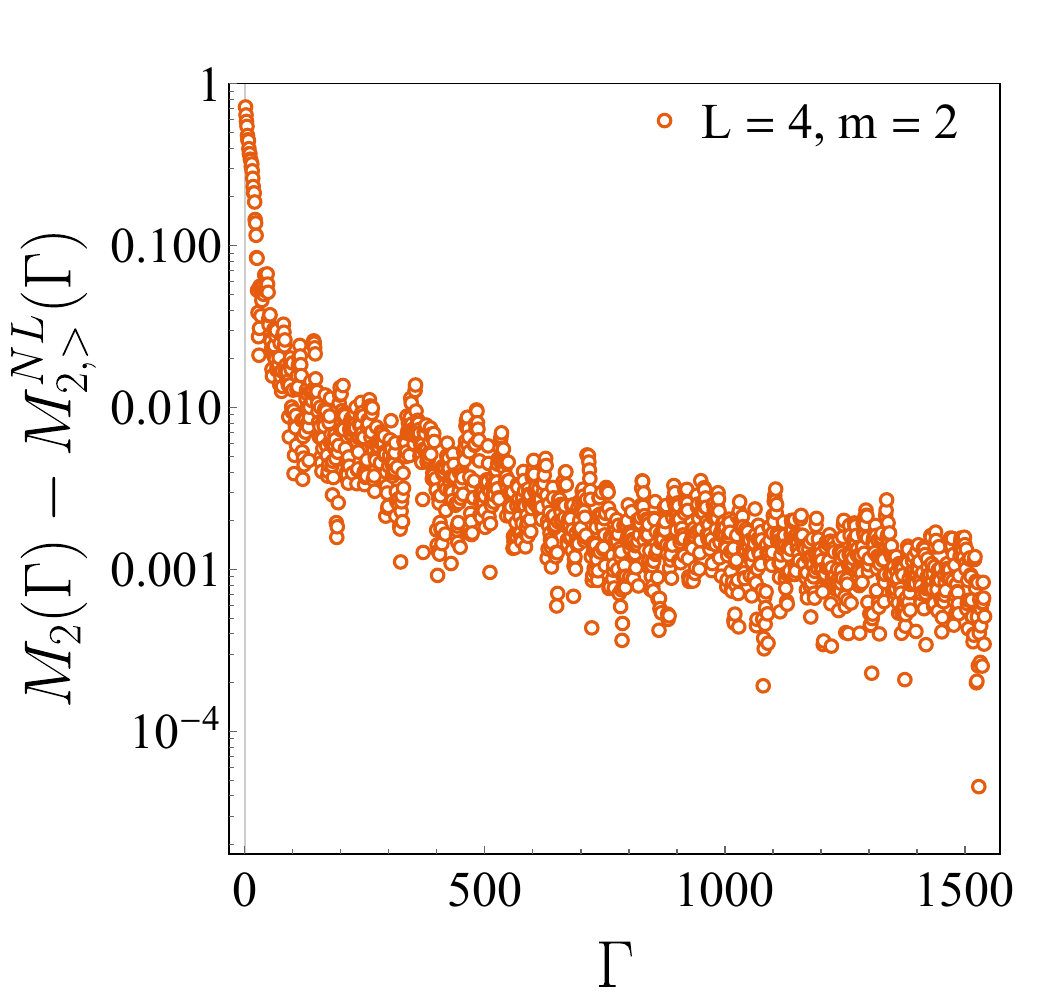}
\caption{ \textbf{Simulated-annealing benchmark of the analytical nonlocal-magic bound.} For random pure Gaussian states with $2L=8$ Majorana modes, we minimize $M_2(\Gamma)$ over the local Gaussian orbit and plot $M_2(\Gamma)-M^{\rm NL}_{2,>}(\psi_\Gamma)$ along the annealing trajectory. 
The two panels correspond to bipartitions with $|A|=2m=2$ (left) and $|A|=2m=4$ (right). The trajectories approach the analytical value without crossing it, providing numerical evidence that $M^{\rm NL}_{2,>}(\psi_\Gamma)$ attains the Gaussian local minimum.  The annealing schedule uses $d\beta=0.5$.
} \label{fig:annealing}
\end{figure}

\paragraph{Numerical benchmark via simulated annealing.--}
We now compare $M_{\alpha,>}^\text{NL}(\psi_\Gamma)$ and $M_{2,\text{G}}^\text{NL}(\psi_\Gamma)$ numerically: we sample random pure Gaussian states and explore their local orthogonal orbit through simulated annealing. 

We generate states through their covariance matrix $\Gamma_{O} = O \Gamma_\text{vac} O^{T}$ using Haar-random  $O \in SO(2L)$. For every state, we first evaluate $M_{2,>}^\text{NL}(\psi_\Gamma)$ from Eq.~\eqref{eq:MNLG_main} and then compare it against $M_{2}(\Gamma)$ from Eq.~\eqref{eq:MGam}.
Since this exact computation grows exponentially with system size, the benchmark could only be performed for relatively small systems.

To test whether  $M_{2,>}^\text{NL}(\psi_\Gamma)$ indeed gives the minimum over all local Gaussian basis, we perform simulated annealing: at each Monte Carlo step, we propose an update of the form
$\Gamma \longmapsto (O_A\oplus O_B)\,\Gamma\,(O_A\oplus O_B)^T$ where, to produce small updates, we generate $O_{A/B}$ as exponentials of random antisymmetric matrices,
$O_{A/B} = e^{\epsilon K_{A/B}}$, with each $K=(Q-Q^T)/2$ obtained from a Gaussian random matrix $Q$.
The proposed move is accepted with the standard Metropolis rule,
$p=\min\left\{1,\exp\left[-\beta\left(M_2(\Gamma')-M_2(\Gamma)\right)\right]\right\}$,
where $\Gamma'=(O_A\oplus O_B)\Gamma(O_A\oplus O_B)^T$
and $\beta$ is the inverse temperature. As the annealing proceeds, $\beta$ is gradually increased while the typical step size is reduced according to $\epsilon\sim \beta^{-1/2}$ so that the dynamics initially explores the landscape broadly and later focuses on a small neighborhood of the lowest values encountered. 

For a random initial condition, we record the full annealing trajectory and compare the minimum $M_{2}(\Gamma)$ reached numerically with $M_{2,>}^\text{NL}(\psi_\Gamma)$. Repeating this analysis over many independently generated Gaussian states, we consistently found that $M_{2,>}^\text{NL}(\psi_\Gamma)$ provides a \emph{lower} bound to $M_{2}(\Gamma)$ along the entire trajectory, suggestive of $M_{\alpha,>}^\text{NL}(\psi_\Gamma)$ achieving minimization over local Gaussian unitaries, i.e.,  $M_{\alpha,>}^\text{NL}(\psi_\Gamma) = M_{\alpha,\text{G}}^\text{NL}(\psi_\Gamma)$ (see Fig.~\ref{fig:annealing} for two trajectories corresponding to two different partitions).

\begin{figure}[t!]
\includegraphics[width=0.48\textwidth]{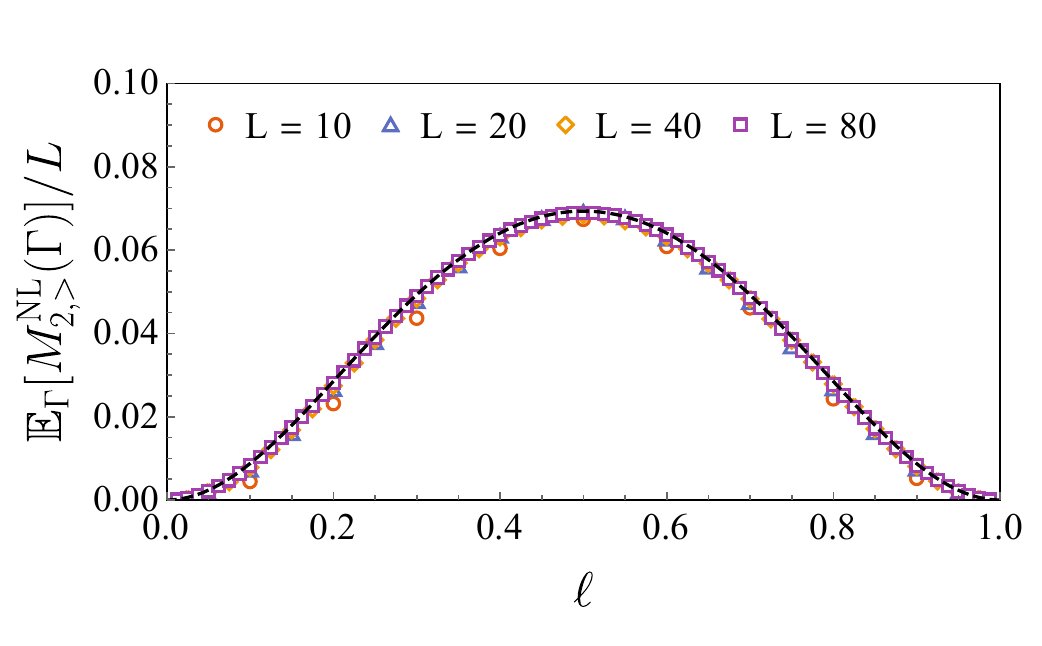}
\caption{\textbf{Random Gaussian ensemble average of the nonlocal magic density.} The quantity $\mathbb{E}_{\Gamma}[M^{\rm NL}_{2,>}(\psi_{\Gamma})]/L$ is shown as a function of the bipartition fraction $\ell=m/L$. The dashed black line denotes the thermodynamic prediction $J(\ell)$ in Eq.~(\ref{eq:m2NL_avg_th}), while symbols are numerical averages of Eq.~(\ref{eq:MNLG_main}) over $100$ Haar-random Gaussian states for different $L$. 
The finite-size data converge to the Jacobi-ensemble result, with a maximum at the symmetric bipartition.
} \label{fig:random_gauss}
\end{figure}
\paragraph{Random Gaussian Ensemble Average.--}
As an application, we compute the average non-local $2$-SRE density over the Gaussian Haar ensemble for a generic bipartition $A{:}B$ with $|A|=2m$, $|B|=2n$ and $m+n=L$. Since Eq.(\ref{eq:MNLG_main}) is additive (i.e., a linear statistic~\cite{BeenRMT}), 
\begin{equation}\label{eq:avg_M2NL}
\frac{\mathbb{E}_{\Gamma}[M_{2,>}^\text{NL}(\psi_\Gamma)]}{L} = 
-\frac{m}{L}\int_{0}^{1} dx \log(1-x+x^2) \rho_{m,n}(x)
\end{equation}
where $\rho_{m,n}(x)$ is the one-point density of $x=\nu^2$. For the Gaussian Haar ensemble, the joint probability density $P_{m,n}(x_1,\dots,x_m)$ of $x_k = \nu_k^2 \in [0,1]$ for $k=1,\dots,m$ is of Jacobi-type~\cite{bianchi_2021,sierant_2026,supp_material} and corresponds to the class DIII~\cite{AltlandZirnbauer} circular ensemble in terms of covariance matrices~\cite{Bejan_MCDT}. While $\rho_{m,n}(x)$ have exact expressions in terms of Jacobi polynomials~\cite{bianchi_2021,jiang_2009}, here we are interested only in the thermodynamic regime  $L\to \infty$ with fixed $\ell=\frac{m}{L}$, where $\rho_{m,n}(x)\to \rho_(x)$ with~\cite{BeenRMT,Dahlhaus_PhysRevB.82.014536,supp_material}
\begin{equation}\label{eq:rho1}
\rho(x)
=
\frac{1}{2\pi\ell}\,
\frac{\sqrt{(\tilde x-x)/x}}{1-x}\,
\mathbf{1}_{(0,\,\tilde x)}(x),\quad \tilde x = 4\ell (1-\ell). 
\end{equation}
Using this in Eq.~(\ref{eq:avg_M2NL}),  
$\mathcal{J}(\ell) \equiv \lim_{L\to\infty} \frac{1}{L} \mathbb{E}_{\Gamma}[M_{2,>}^\text{NL}(\psi_\Gamma)]$ evaluates to~\cite{supp_material}
\begin{eqnarray}\label{eq:m2NL_avg_th}
\mathcal{J}(\ell) & = & \Re 
\Bigg\{
2\sqrt{1-\tilde x}\log\left[ \frac{\sqrt{1-\tilde x}+\sqrt{1-\tilde x e^{i\pi/3}}}{\sqrt{1-\tilde x}+1}\right]\nonumber \\
& - &
2\log\left[ \frac{1+\sqrt{1-\tilde x e^{i\pi/3}}}{2}\right]
\Bigg\}.
\end{eqnarray}
We find that $\mathcal{J}(\ell)$ is maximal at the symmetric bi-partition, i.e. $\mathcal{J}(1/2) = \log(8-4\sqrt{3}) \simeq 0.0693365$; the non-local magic of the Gaussian ensemble, although being extensive, contributes only a small fraction of the total non-stabiliserness (for a Gaussian random state $\mathbb{E}_{\Gamma}[M_2 (\psi_{\Gamma})] \simeq cL$ with $c\simeq 1$~\cite{sierant_2026}).
In Fig.~\ref{fig:random_gauss}, we show the convergence of the averaged non-local Gaussian magic towards $\mathcal{J}(\ell)$ as $L$ increases.

\begin{figure}[t]
\includegraphics[width=0.24\textwidth]{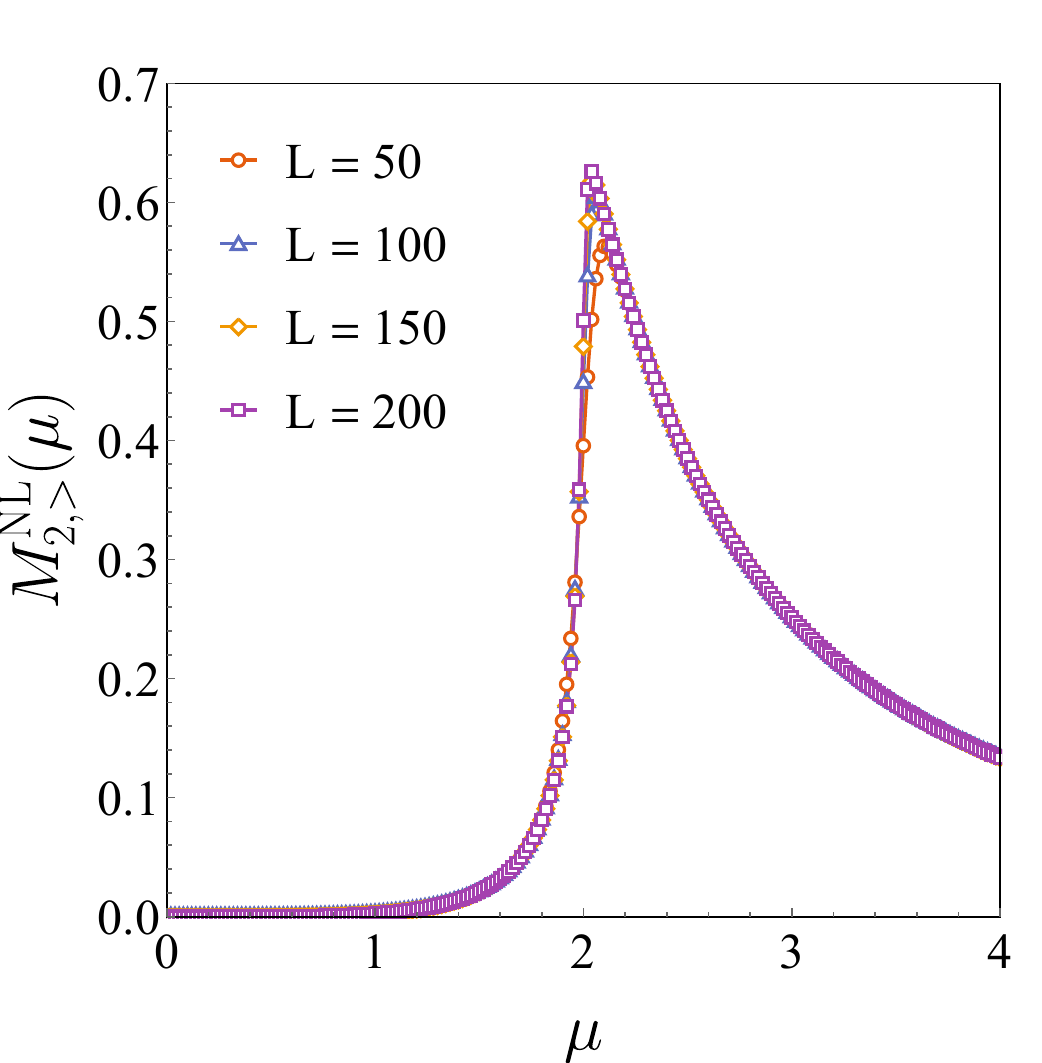}\includegraphics[width=0.24\textwidth]{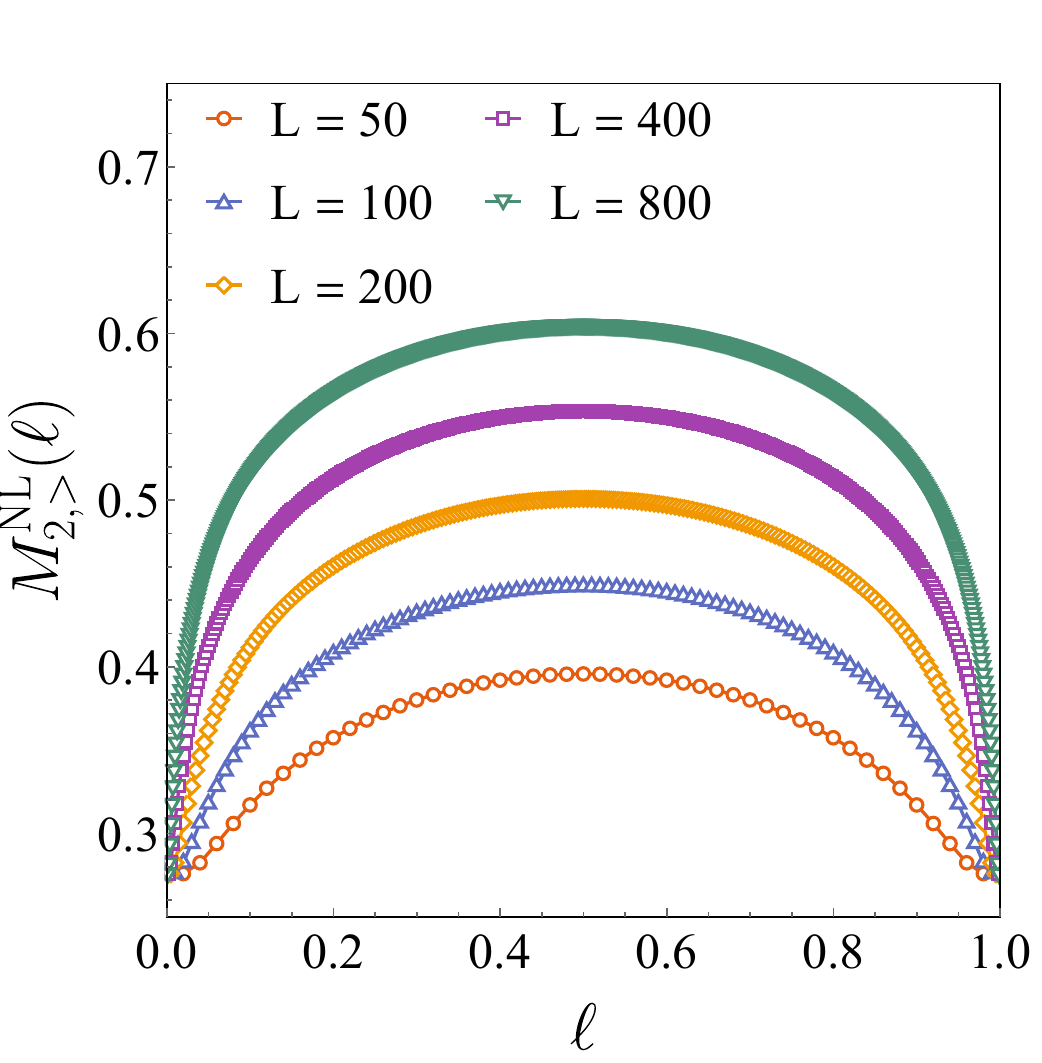}
    \caption{\textbf{Ground-state nonlocal magic in the Kitaev chain.}
    (Left panel) Nonlocal Gaussian magic $M^{\rm NL}_{2,>}(\psi_\Gamma)$ for the ground state of the Kitaev chain with $t=\Delta=1$, shown as a function of $\mu$ for different system sizes and a symmetric bipartition. The peak sharpens near the critical point $\mu=2$. (Right panel) $M^{\rm NL}_{2,>}(\psi_\Gamma)$ at criticality, $\mu=2$, as a function of the bipartition fraction $\ell=m/L$. The growth with system size is consistent with logarithmic critical scaling.
    }
    \label{fig:kitaev}
\end{figure}

\paragraph{Hamiltonian ground states.--}
As a next application, we consider ground states of the one-dimensional Kitaev Hamiltonian $H_{\rm K}=
\sum_j \left \lbrace \left(-t c_j^\dagger c_{j+1}+\Delta  c_j c_{j+1}+{\rm h.c.}\right) -\mu\left(c_j^\dagger c_j-\frac{1}{2}\right) \right \rbrace$. 
For \(\Delta=t\), the model undergoes a topological transition at \(|\mu|=2t\). In right panel of Fig.~\ref{fig:kitaev}, we show \(M^{\rm NL}_{2,>}(\psi_\Gamma)\) 
for \(t=\Delta=1\), as a function of \(\mu\), for increasing system sizes and a symmetric bipartition.
\(M^{\rm NL}_{2,>}(\psi_\Gamma)\) is suppressed deep in both gapped phases (the asymmetry around $\mu=2$ reflects that of the stabilizer limits, $\mu=0$ and $\mu\to\infty$ in the topological and trivial phases, respectively).
\(M^{\rm NL}_{2,>}(\psi_\Gamma)\), however, develops a pronounced peak near the critical point \(\mu=2\). Upon increasing \(L\), the peak sharpens, indicating 
that \(M^{\rm NL}_{2,>}(\psi_\Gamma)\) is sensitive to the critical reorganization of the Gaussian correlations across the bipartition.

Away from criticality, only a finite number of such modes contributes appreciably, leading to a small value of the nonlocal magic. At criticality, instead, the
scale-invariant correlations generate a broad crossover in the spectrum of \(i\Gamma_{AA}\), producing an increasing number of intermediate modes.

This critical enhancement is further resolved in the left panel of Fig.~\ref{fig:kitaev}, where we fix \(\mu=2\) and plot $M^{\rm NL}_{2,>}$ as a function of the bipartition
fraction \(\ell=m/L\). The curves are maximal close to the symmetric cut and increase with system size. At fixed \(\ell\), the growth is consistent with a logarithmic dependence on the subsystem size, reflecting the logarithmic broadening of the Gaussian entanglement spectrum at the critical point.

\begin{figure}[t!]
\includegraphics[width=0.24\textwidth]{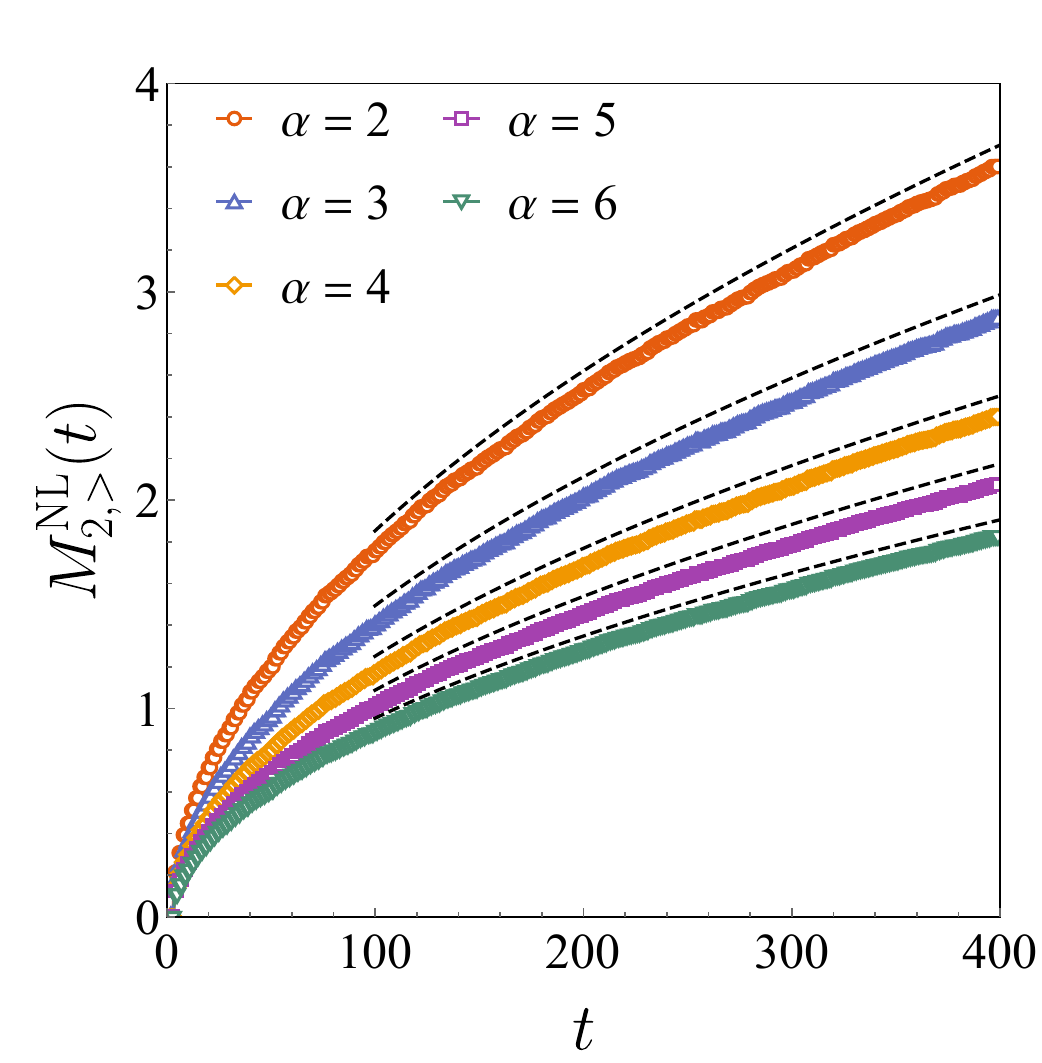}\includegraphics[width=0.24\textwidth]{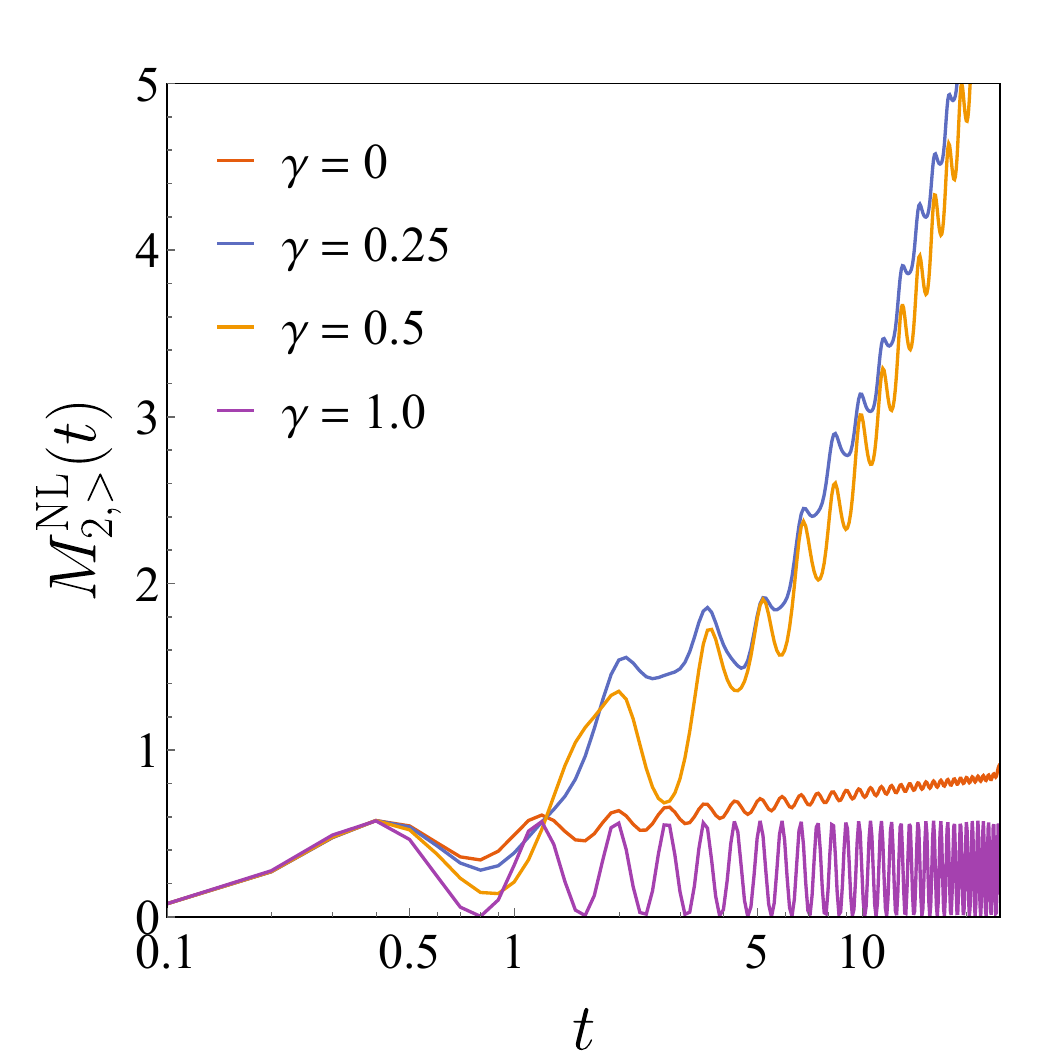}
\caption{
\textbf{Gaussian dynamics of nonlocal magic.} (Left panel) In random Gaussian brick-wall circuits, $M_{\alpha,>}^{\rm NL}(\psi_{\Gamma(t)})$ shows a diffusive behavior for all Rényi indices, with the black-dashed lines highlighting the $t^{1/2}$ scaling.
(Right panel) After a N\'eel-state quench in the XY chain, $M_{2,>}^{\rm NL}(\psi_{\Gamma(t)})$ grows logarithmically at the $U(1)$-symmetric XX point $\gamma=0$, while it grows linearly for $\gamma\neq0$, except in the limit $\gamma=1$.
}
\label{fig:XYquench}
\end{figure}

\paragraph{Gaussian dynamics: random circuits and Hamiltonian quenches.---}
We finally study the real-time generation of nonlocal magic in Gaussian dynamics. We first consider a brick-wall circuit of random two-site Gaussian gates initialized from a product state with $L=200$. As shown in left panel of Fig.~\ref{fig:XYquench},  $M_{2,>}^{\rm NL}$  grows sub-ballistically for all Rényi indices considered, with $M_{\alpha,>}^{\rm NL}(\psi_{\Gamma(t)})\sim t^{1/2}$.

We then study Hamiltonian dynamics, focusing on the XY chain $H_{\mathrm{XY}}
=
\sum_{j=1}^L
\left(
c_j^\dagger c_{j+1}
+
\gamma c_j^\dagger c_{j+1}^\dagger
+
h.c
\right)
$,
following a quench from the N\'eel state. We evolve up to time $t_{\max}=L/4$ to avoid boundary effects and compute $M_{2,>}^{\rm NL}$ across the central bipartition. Fig.~\ref{fig:XYquench}   (right panel) shows that the growth of $M_{2,>}^{\rm NL}$ depends sharply on  $\gamma$.

At the U($1$)-symmetric XX point ($\gamma=0$), $M_{2,>}^{\rm NL}$ grows only logarithmically in time, while for $\gamma\neq 0$ (but away from the $\gamma=1$ limit, i.e., the $t=\Delta=-1$ limit of the Kitaev chain) the growth becomes linear. This reveals a dynamical separation between entanglement (growing linearly for all $\gamma\neq 1$) and nonlocal magic, and suggests that charge conservation strongly suppresses the buildup of irreducible Gaussian nonstabilizerness. 

The physical origin of this effect is in the emergence, after the quench, of a linearly expanding plateau in the bipartite spectrum at $\nu\simeq 0$. This plateau drives the usual ballistic growth of entanglement, but, being near-stabilizer, it contributes negligibly to $M_{\alpha,>}^{\rm NL}$. Nonlocal magic, instead, is sensitive only to the subleading crossover between the $\nu\simeq 0$ and $\nu\simeq 1$ parts of the spectrum. The resulting logarithmic growth therefore lies beyond the conventional leading-order quasiparticle picture~\cite{alba_2017,alba_2018}, which accounts only for the ballistic contribution of entangled pairs crossing the bipartition. While this does not exclude a more refined quasiparticle description, it shows that the standard pair-counting mechanism is not sufficient to capture the observed behavior.

\paragraph{Conclusions.---}
We a obtained a simple and efficiently computable bound $M_{\alpha,>}^{\rm NL}$ on bipartite nonlocal nonstabilizerness for pure fermionic Gaussian states. We provided an argument for the bound being tight, proved $M_{2,>}^{\rm NL}$ to be a local minimum of nonstabilizerness along the local Gaussian orbit~\cite{supp_material}, and also provided numerical evidence of it being a global minimum in this setting. 

In the Gaussian Haar ensemble, we obtained a closed expression for the average of $M_{2,>}^{\rm NL}$ which, unlike nonlocal magic in Haar-random states~\cite{cao_2025},
is extensive, albeit it contributes only a small fraction of the total 2-SRE. Our results on the Kitaev chain suggests that $M_{2,>}^{\rm NL}$ is suppressed in gapped ground states but exhibits a singularity at quantum phase transitions. In the Kitaev chain, our results are consistent with $M_{2,>}^{\rm NL}\sim \log(L)$ at criticality. For the dynamics of $M_{2,>}^{\rm NL}$, we found a $t^{1/2}$ growth in quantum circuits and linear growth for the XY Hamiltonian. However, in the XX chain, starting from the Neel state, $M_{2,>}^{\rm NL}$ grows only logarithmically, in a striking distinction from the linear growth of entanglement.  

Our results identify free-fermion systems as a rare many-body setting with an easily accessible notion of nonlocal magic. Natural directions this opens for study include the interplay between symmetries and nonlocal magic, e.g., through analogs of symmetry-resolved entanglement~\cite{goldstein2018symmetry,bonsignori2019symmetry} and entanglement asymmetry~\cite{ares2023entanglement}, nonlocal magic in free-fermion topological phases in various dimensions and symmetry classes~\cite{Schnyder2008_PhysRevB.78.195125,kitaev2009periodic,Hasan_Kane_RevModPhys.82.3045,QiZhang_RevModPhys.83.1057}, or in monitored and driven free-fermionic dynamics.

\paragraph{Acknowledgments.---}
M.C. acknowledges support from the Friuli Venezia Giulia Region through the QUASAR-FVG project, funded within the Piano Sviluppo e Coesione FVG.
B.B. was supported by EPSRC grant EP/V062654/1.
E.T. was funded by the Swiss National Science Foundation
(SNSF) under Grant No. TMPFP2\_234754. E.T. acknowledges CINECA (Consorzio Interuniversitario per il Calcolo Automatico) award, under the ISCRA initiative and Leonardo early access program, for the availability of high-performance computing resources and support.

\paragraph{Note Added.---}
While preparing this manuscript, we discovered a closely related independent work by D. Iannotti, B Magni, R. Cioli, A. Hamma, X. Turkeshi, appearing in the same arXiv posting~\cite{toappear}. When overlapping, our results are consistent.

\bibliography{bibliography}

\newpage
\onecolumngrid
\appendix

\section{SUPPLEMENTAL MATERIAL}

\subsection{A matrix inequality}
\label{sec:matrix_inequality}

The following elementary bound is purely algebraic and independent of the fermionic setting.

\begin{lemma}
\label{lem:l4sv}
For any matrix $A\in M_{r\times s}(\RR)$, and integer $\alpha>1$,
$$
\|A\|_{\ell_{2\alpha}}^{2\alpha}\le \sum_k \sigma_k(A)^{2\alpha},
$$
where $\sigma_k(A)$ are the singular values of $A$. Equality holds whenever $A$ is rectangular diagonal.
\end{lemma}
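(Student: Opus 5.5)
We prove $\sum_{i,j}|A_{ij}|^{2\alpha}\le\sum_k\sigma_k(A)^{2\alpha}$ by comparing entrywise and Schatten norms through the SVD and a doubly substochastic matrix. Write $A=U\Sigma V^T$ with $U\in O(r)$, $V\in O(s)$, and $\Sigma$ rectangular diagonal with entries $\sigma_1,\dots,\sigma_p$, $p=\min(r,s)$. Then
$$A_{ij}=\sum_{k=1}^{p}U_{ik}\,\sigma_k\,V_{jk}.$$
The idea is to regard $A_{ij}$ as a weighted average of the $\sigma_k$ and apply convexity.

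\textbf{Step 1: Cauchy--Schwarz on each entry.} Split $U_{ik}\sigma_kV_{jk}=(U_{ik}\sqrt{\sigma_k})(\sqrt{\sigma_k}V_{jk})$. Cauchy--Schwarz gives
$$|A_{ij}|^2\le\Big(\sum_k U_{ik}^2\sigma_k\Big)\Big(\sum_k V_{jk}^2\sigma_k\Big).$$
Set $a_i=\sum_k U_{ik}^2\sigma_k$ and $b_j=\sum_k V_{jk}^2\sigma_k$. Then
$$\sum_{i,j}|A_{ij}|^{2\alpha}\le\Big(\sum_i a_i^{\alpha}\Big)\Big(\sum_j b_j^{\alpha}\Big).$$

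\textbf{Step 2: Jensen on each factor.} The weights $(U_{ik}^2)_k$ satisfy $\sum_{k\le p}U_{ik}^2\le1$, because $U$ is orthogonal. Since $t\mapsto t^\alpha$ is convex with value $0$ at $0$, Jensen's inequality (with the leftover mass placed at $0$) gives
$$a_i^\alpha\le\sum_k U_{ik}^2\sigma_k^\alpha.$$
Summing over $i$ and using $\sum_i U_{ik}^2=1$ yields $\sum_i a_i^\alpha\le\sum_k\sigma_k^\alpha$. The same argument gives $\sum_j b_j^\alpha\le\sum_k\sigma_k^\alpha$. Combining with Step 1,
$$\sum_{i,j}|A_{ij}|^{2\alpha}\le\Big(\sum_k\sigma_k^\alpha\Big)^2.$$

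\textbf{Step 3: the main obstacle.} The bound from Step 2 is too weak. We have $(\sum\sigma_k^\alpha)^2\ge\sum\sigma_k^{2\alpha}$, with the inequality pointing the wrong way for our purpose. Splitting symmetrically loses the cross terms, so we abandon Cauchy--Schwarz and keep the bilinear structure instead. Define the matrix $W_{ij,k}=U_{ik}V_{jk}$. Using Hölder in a form adapted to the rank-one pieces runs into the same issue, so we pass to a cleaner route via interpolation between $p=2$ and $p=\infty$.

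The two endpoint identities are:
\begin{itemize}
\item $\|A\|_{\ell_2}=\|A\|_{S_2}$, since the Frobenius norm equals the Hilbert--Schmidt norm;
\item $\|A\|_{\ell_\infty}\le\|A\|_{S_\infty}$, since $|A_{ij}|=|e_i^TAe_j|\le\sigma_{\max}$.
\end{itemize}
The cleanest proof of the full inequality is the classical fact that $\|\cdot\|_{\ell_q}\le\|\cdot\|_{S_q}$ for $q\ge2$ (Bhatia, \emph{Matrix Analysis}). We prove it as follows. Write $A=\sum_j a_je_j^T$ in terms of its columns $a_j$. The squared column norms $\|a_j\|^2=(A^TA)_{jj}$ are the diagonal of $A^TA$. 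By Schur's majorization, this diagonal is majorized by the eigenvalues $\sigma_k^2$ of $A^TA$. The function $t\mapsto t^{\alpha}$ is convex, so majorization gives
$$\sum_j\|a_j\|^{2\alpha}\le\sum_k\sigma_k^{2\alpha}.$$
Finally, for each column, $\sum_i|A_{ij}|^{2\alpha}\le(\sum_i A_{ij}^2)^{\alpha}=\|a_j\|^{2\alpha}$, because $\|\cdot\|_{\ell_{2\alpha}}\le\|\cdot\|_{\ell_2}$. Summing over $j$ proves the lemma. This majorization argument replaces Steps 1 and 2 entirely and is the proof we adopt.

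\textbf{Equality case.} If $A$ is rectangular diagonal, its nonzero entries are $\pm\sigma_k$, so both sides coincide.
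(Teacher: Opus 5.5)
Your adopted argument is correct and is essentially the paper's proof. You bound each column's $\ell_{2\alpha}$ norm by its $\ell_2$ norm to reduce to $\sum_j (A^TA)_{jj}^{\alpha}$, then compare the diagonal of $A^TA$ with its eigenvalues $\sigma_k^2$; the only addition is that you justify $\sum_j (A^TA)_{jj}^{\alpha}\le \mathrm{Tr}[(A^TA)^{\alpha}]$ explicitly via Schur majorization and convexity of $t\mapsto t^{\alpha}$ (the extra zero eigenvalues are harmless since $0^{\alpha}=0$), which the paper leaves implicit. The abandoned Steps 1--3 can be dropped from a final write-up.
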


\begin{proof}
One has
$$
\|A\|_{\ell_{2\alpha}}^{2\alpha}=\sum_{j=1}^s\sum_{i=1}^r|A_{ij}|^{2\alpha}
\le
\sum_{j=1}^s\left(\sum_{i=1}^r|A_{ij}|^2\right)^\alpha.
$$
Since $\sum_{i=1}^r|A_{ij}|^2=(A^TA)_{jj}$, it follows that
$$
\|A\|_{\ell_{2\alpha}}^{2\alpha}
\le
\sum_{j=1}^s (A^TA)_{jj}^{\alpha}
\le
\Tr\!\bigl[(A^TA)^\alpha\bigr].
$$
Since $\Tr[(A^TA)^\alpha]=\sum_k \sigma_k(A)^{2\alpha}$, the claim follows. If $A$ is rectangular diagonal, each column has at most one nonzero entry and $A^TA$ is diagonal, so both inequalities are saturated.
\end{proof}

\subsection{Gaussian optimization and the local canonical form}
\label{sec:gaussian_optimization_canonical}

Let $\gamma_1,\dots,\gamma_{2L}$ be Majorana operators, satisfying $\{\gamma_i,\gamma_j\}=2\delta_{ij}$, and adapted to a bipartition $A:B$ with $|A|=2m$, $|B|=2n$, and $m+n=L$. For an ordered subset $I=\{i_1<\cdots<i_p\}\subseteq A$, define the Majorana string $\gamma_I:=\gamma_{i_1}\cdots\gamma_{i_p}$; similarly, for $J=\{j_1<\cdots<j_q\}\subseteq B$, define $\gamma_J:=\gamma_{j_1}\cdots\gamma_{j_q}$. Any state $\rho$ can be expanded in this operator basis as
$$
\rho=\frac{1}{2^L}\sum_{I\subseteq A}\sum_{J\subseteq B} M_{I,J}(\rho)\,\gamma_I\gamma_J,
\qquad
M_{I,J}(\rho):=\Tr(\rho\,\gamma_I\gamma_J),
$$
where the factor $2^{-L}$ follows from the orthogonality of Majorana strings under the Hilbert-Schmidt inner product. For fixed integers $p\in\{0,\dots,2m\}$ and $q\in\{0,\dots,2n\}$, let $M_{p,q}(\rho)$ denote the matrix whose rows are labelled by subsets $I\subseteq A$ with $|I|=p$, whose columns are labelled by subsets $J\subseteq B$ with $|J|=q$, and whose entries are $\bigl(M_{p,q}(\rho)\bigr)_{I,J}:=M_{I,J}(\rho)$.

For a matrix $X=(X_{ab})$, define its entrywise $\ell_r$ norm by $\|X\|_{\ell_r}^r:=\sum_{a,b}|X_{ab}|^r$. The quantity relevant for the Rényi-$\alpha$ problem is then
$$
\mathcal{F}_\alpha(\rho):=\sum_{I,J}|M_{I,J}(\rho)|^{2\alpha}
=
\sum_{p=0}^{2m}\sum_{q=0}^{2n}\|M_{p,q}(\rho)\|_{\ell_{2\alpha}}^{2\alpha}.
$$
Thus, minimizing the corresponding stabiliser Rényi entropy under local unitary transformation is equivalent to maximizing $\mathcal{F}_\alpha$. In particular, for $\alpha=2$ one has $\mathcal{F}_2(\rho)=\sum_{p,q}\|M_{p,q}(\rho)\|_{\ell_4}^4$. Lemma~\ref{lem:l4sv} implies the bound $\|M_{p,q}(\rho)\|_{\ell_\alpha}^\alpha\le \sum_a \sigma_a(M_{p,q}(\rho))^{2\alpha}$, where $\sigma_a(X)$ denotes the singular values of the matrix $X$. 
Therefore, if one finds a local unitary \(U_A \otimes U_B\) such that all matrices \(M_{p,q}\) are diagonalized in their singular-value bases, then the bound is saturated sector by sector and \(\mathcal{F}_\alpha\) is maximized. In general, however, this should be regarded as an exceptional rather than generic situation. The reason is that the singular-value decomposition of \(M_{p,q}\) involves arbitrary orthogonal transformations on the corresponding coefficient spaces, whereas the transformations induced by \(U_A \otimes U_B\) are restricted to the adjoint action of the local unitary groups, \(M_{p,q} \mapsto R_A M_{p,q} R_B^T\), with \(R_A\) and \(R_B\) belonging only to proper subgroups of the relevant orthogonal groups. Hence, although the SVD always exists at the level of abstract orthogonal rotations of the coefficients, there is in general no reason to expect its singular-vector bases to coincide with those that can be reached by physical local unitary rotations of the state. This obstruction becomes even more severe if one further requires the transformation to stay within a restricted class of states, such as the Gaussian manifold.\\

We now specialize to a pure fermionic Gaussian state $\rho_\Gamma$, with real antisymmetric covariance matrix $\Gamma\in M_{2L}(\mathbb{R})$ satisfying $\Gamma^T=-\Gamma$ and $\Gamma^2=-\Id_{2L}$. For an even ordered set $X=\{x_1<\cdots<x_{2r}\}$, Wick's theorem gives $\Tr(\rho_\Gamma\,\gamma_{x_1}\cdots\gamma_{x_{2r}})=i^r\pf(\Gamma_X)$, up to the convention-dependent overall phase. Since only absolute values enter $\mathcal{F}_\alpha$, this phase is immaterial, and one may equivalently work with the Pfaffian matrices
$$
\bigl(M_{p,q}(\Gamma)\bigr)_{I,J}:=\pf(\Gamma_{I\cup J}),
$$
where $\Gamma_{I\cup J}$ is the principal submatrix of $\Gamma$ obtained by restricting rows and columns to the ordered set $I\cup J$. Accordingly, $\mathcal{F}_\alpha(\rho_\Gamma)$ may be evaluated directly from the matrices $M_{p,q}(\Gamma)$.\\

A standard theorem for pure fermionic Gaussian states states that~\cite{botero_2004}, under local orthogonal transformations, $\Gamma$ can be brought to a canonical form. Writing $r:=\min(m,n)$ and assuming for definiteness $m\le n$, there exist $O_A\in O(2m)$ and $O_B\in O(2n)$ such that
$$
\Gamma_{\mathrm{can}}=(O_A\oplus O_B)\Gamma(O_A\oplus O_B)^T
=
\bigoplus_{k=1}^{r}\Gamma_k\oplus \bigoplus_{\ell=1}^{n-r}J_\ell,
$$
up to irrelevant row/column permutations (and in a slightly different convention from the main text), where each entangled block is
$$
\Gamma_k=
\begin{pmatrix}
0 & \nu_k & \mu_k & 0\\
-\nu_k & 0 & 0 & \mu_k\\
-\mu_k & 0 & 0 & -\nu_k\\
0 & -\mu_k & \nu_k & 0
\end{pmatrix},
\qquad
\mu^2_k=1-\nu_k^2,
\qquad
0\le \nu_k\le 1,
$$
and each purely local block on $B$ is $J_\ell=\begin{pmatrix}0&1\\-1&0\end{pmatrix}$. Here $\oplus$ denotes the direct sum of matrices. Each block $\Gamma_k$ couples one fermionic mode in $A$ to one fermionic mode in $B$, while each $J_\ell$ is a local pure block supported entirely on subsystem $B$.

For each entangled block $\Gamma_k$, let $A_k$ and $B_k$ denote the corresponding two-Majorana index sets on the two sides. For $p,q\in\{0,1,2\}$, define $M_{p,q}^{(k)}:=M_{p,q}(\Gamma_k)$. Since $|A_k|=|B_k|=2$, a direct computation gives $M_{0,0}^{(k)}=[1]$, $M_{2,0}^{(k)}=[\nu_k]$, $M_{0,2}^{(k)}=[-\nu_k]$, $M_{1,1}^{(k)}=\mu_k\,\Id_2$, and $M_{2,2}^{(k)}=[-1]$, while all other $M_{p,q}^{(k)}$ vanish. Thus every nonzero elementary matrix is either a scalar or a scalar multiple of the identity. For a local block $J_\ell$ on $B$, the only nonzero coefficient matrices are $M_0(J_\ell)=[1]$ and $M_2(J_\ell)=[-1]$.

Because $\Gamma_{\mathrm{can}}$ is a direct sum of independent blocks, the corresponding Gaussian state factorizes into a tensor product of block states
$$
\rho_{\Gamma_{\mathrm{can}}}
=
\bigotimes_{k=1}^{r}\rho_{\Gamma_k}\otimes \bigotimes_{\ell=1}^{n-r}\rho_{J_\ell}.
$$ 
At fixed total bidegree $(p,q)$, one must sum over all compatible splittings of $p$ and $q$ among the entangled blocks and the local $J_\ell$ blocks. For $m\le n$, this means choosing $(p_k,q_k)\in\{0,1,2\}^2$ for each $\Gamma_k$ and $q_\ell^{\mathrm{loc}}\in\{0,2\}$ for each $J_\ell$, subject to $\sum_k p_k=p$ and $\sum_k q_k+\sum_\ell q_\ell^{\mathrm{loc}}=q$. For each such choice, the corresponding coefficient matrix is the Kronecker product
$$
\bigotimes_{k=1}^{r} M^{(k)}_{p_k,q_k}
\;\otimes\;
\bigotimes_{\ell=1}^{n-r} M_{q_\ell^{\mathrm{loc}}}(J_\ell).
$$
Each nonzero elementary factor is square, hence every such Kronecker product is square as well. This does not contradict the fact that the full matrix $M_{p,q}(\Gamma_{\mathrm{can}})$ is generally rectangular, since the latter is indexed by all subsets $I\subseteq A$ with $|I|=p$ and all subsets $J\subseteq B$ with $|J|=q$, and therefore may also contain rows or columns corresponding to blockwise occupations incompatible with the nonzero elementary sectors. Those rows or columns vanish identically. Up to a permutation of rows and columns, $M_{p,q}(\Gamma_{\mathrm{can}})$ 
is obtained by embedding the direct sum
$$
\bigoplus_{\substack{\mathrm{compatible}\\ \mathrm{splittings}}}
\left(
\bigotimes_{k=1}^{r} M^{(k)}_{p_k,q_k}
\;\otimes\;
\bigotimes_{\ell=1}^{n-r} M_{q_\ell^{\mathrm{loc}}}(J_\ell)
\right)
$$
into the full rectangular matrix of size $\binom{2m}{p}\times \binom{2n}{q}$, with all remaining entries equal to zero. Equivalently, the nonzero support of $M_{p,q}(\Gamma_{\mathrm{can}})$ is the direct sum over compatible splittings, while all rows and columns corresponding to incompatible block-wise occupations vanish identically. Since every nonzero elementary factor is diagonal, or a scalar multiple of the identity, each Kronecker product in the nonzero block is diagonal. 
 
Although the elementary block matrices are scalar or proportional to the identity, the full matrices
\(M_{p,q}(\Gamma_{\rm can})\) are not, in general, diagonal in singular-value bases after summing over all compatible
splittings. Therefore, Lemma~\ref{lem:l4sv} is not saturated sector by sector in general. The following computation should thus be understood as an explicit evaluation of \(F_\alpha\) on the canonical representative, rather than by itself as a proof of optimality.

\subsection{Explicit evaluation of $\mathcal{F}_\alpha$ on the canonical form}

The previous section identifies $\Gamma_{\mathrm{can}}$ as the optimal candidate to be the optimizer for a Gaussian state. Since the coefficient matrices of $\Gamma_{\mathrm{can}}$ factorize block by block, $\mathcal{F}_\alpha$ factorizes as well and can be evaluated explicitly. For a single entangled block $\Gamma_k$, the only nonvanishing elementary matrices are $M_{0,0}^{(k)}=[1]$, $M_{2,0}^{(k)}=[\nu_k]$, $M_{0,2}^{(k)}=[-\nu_k]$, $M_{1,1}^{(k)}=\mu_k\,\Id_2$, and $M_{2,2}^{(k)}=[-1]$, with $\mu_k^2=1-\nu_k^2$. Therefore the total contribution of block $k$ to $\mathcal{F}_\alpha$ is
$$
|1|^{2\alpha}+|\nu_k|^{2\alpha}+|\nu_k|^{2\alpha}+2|\mu_k|^{2\alpha}+|1|^{2\alpha}
=
2\bigl(1+\nu_k^{2\alpha}+\mu_k^{2\alpha}\bigr).
$$
Each local block $J_\ell$ contributes instead $|M_0(J_\ell)|^{2\alpha}+|M_2(J_\ell)|^{2\alpha}=2$.
Hence, defining $r=\min(m,n)$ and $s=\max(m,n)$, one obtains
$$
\mathcal{F}_\alpha(\Gamma_{\mathrm{can}})
= 2^s\prod_{k=1}^{r}\bigl(1+\nu_k^{2\alpha}+\mu_k^{2\alpha}\bigr).
$$
In particular, for $\alpha=2$,
$$
\mathcal{F}_2(\Gamma_{\mathrm{can}})
=
2^{L}\prod_{k=1}^{r}\bigl(1-\nu_k^2+\nu_k^4\bigr),
$$
since $1+\nu_k^4+\mu_k^4=2(1-\nu_k^2+\nu_k^4)$ and $L=r+s$.

\subsection{Local maximality of $\mathcal{F}_\alpha$ at $\Gamma_{\mathrm{can}}$}
\label{subsec:roadmap_local_maximum}

We explain here why the proof of local maximality of $\mathcal{F}_\alpha$ at the canonical covariance matrix $\Gamma_{\mathrm{can}}$ reduces to a two-block computation. Throughout, $\Gamma_{\mathrm{can}}$ is the block-diagonal canonical form already introduced, while the factorized expression of $\mathcal{F}_\alpha(\Gamma_{\mathrm{can}})$ is the one discussed in the previous section. The statement is local, namely it concerns infinitesimal local Gaussian transformations $\Gamma\mapsto (O_A\oplus O_B)\Gamma(O_A\oplus O_B)^T$ around $\Gamma_{\mathrm{can}}$.

The argument has three ingredients.

\paragraph{(1) Intra-block directions.---}
Since $\Gamma_{\mathrm{can}}$ is a direct sum of independent canonical blocks, and $\mathcal{F}_\alpha(\Gamma_{\mathrm{can}})$ factorizes block by block, any local orthogonal transformation acting only inside one canonical block affects only the corresponding single-block factor. 
Concretely, they act as $O_A\oplus O_B$ with $O_{A,B}\in \text{SO}(2)$ thus changing the off-diagonal by an SO($2$) factor.
Thus the intra-block problem decouples completely. Moreover, for each elementary entangled block $\Gamma_a$, the canonical $4\times4$ structure is already the maximizer under arbitrary local orthogonal transformations acting inside that block. Therefore $\mathcal{F}_\alpha$ is already locally maximal along all intra-block directions.

\paragraph{(2) Two-block channels.---}
Fix two entangled blocks, say $\Gamma_{a}$ and $\Gamma_{b}$. On the $A$ side, the infinitesimal local mixings between the two corresponding two-dimensional Majorana subspaces are parametrized by $2\times2=4$ independent generators from the SO($4$)/SO($2$)$\times$SO($2$) action, where we factor out local changes; the same holds on the $B$ side. Hence the most general local infinitesimal mixing of the pair $(a,b)$ depends on $4+4=8$ real parameters. Since all other blocks are spectators, this channel is completely described by the minimal $8\times8$ model made of the two blocks $\Gamma_a\oplus\Gamma_b$. Denoting these eight parameters collectively by $\boldsymbol\theta$, one computes the Hessian $H_\alpha^{(ab)}:=\partial_{\boldsymbol\theta}^2\mathcal{F}_\alpha|_{\boldsymbol\theta=0}$. For $\alpha=2$, this Hessian can be evaluated symbolically in Mathematica, and one finds that all its eigenvalues are negative. Thus $\Gamma_{\mathrm{can}}$ is a strict local maximum in every two-block mixing channel.

\paragraph{(3) Decoupling of the full Hessian into intra-block and pairwise inter-block channels.---}
It remains to show that, for an arbitrary number $r$ of entangled blocks, the full second variation of $\mathcal{F}_\alpha$ around $\Gamma_{\mathrm{can}}$ reduces to the sum of the intra-block contributions and of the individual two-block mixing channels. This is the key structural point.

Let $A=A_1\oplus\cdots\oplus A_r$ and $B=B_1\oplus\cdots\oplus B_r$, where each $A_a$ and $B_a$ is the two-dimensional Majorana subspace associated with the canonical block $\Gamma_a$. Consider a smooth local Gaussian deformation $\Gamma(\varepsilon)=(O_A(\varepsilon)\oplus O_B(\varepsilon))\,\Gamma_{\mathrm{can}}\,(O_A(\varepsilon)\oplus O_B(\varepsilon))^T$, with $O_A(0)=\Id$, $O_B(0)=\Id$, and write $O_A(\varepsilon)=e^{\varepsilon K_A}$, $O_B(\varepsilon)=e^{\varepsilon K_B}$, where $K_A^T=-K_A$ and $K_B^T=-K_B$. The corresponding first-order variation is $\delta\Gamma=\frac{d}{d\varepsilon}\Gamma(\varepsilon)|_{\varepsilon=0}=[K,\Gamma_{\mathrm{can}}]$, with $K=K_A\oplus K_B$. Thus the tangent space at $\Gamma_{\mathrm{can}}$ is the vector space of all matrices of the form $\delta\Gamma=[K,\Gamma_{\mathrm{can}}]$ generated by infinitesimal local Gaussian transformations.

Because $\Gamma_{\mathrm{can}}$ is organized into canonical blocks, each antisymmetric generator decomposes uniquely into block components, namely $K_A=\sum_a K_A^{(a)}+\sum_{a<b}K_A^{(ab)}$ and similarly for $K_B$, where $K_A^{(a)}$ acts only inside the single block $A_a$, while $K_A^{(ab)}$ mixes only the two blocks $A_a$ and $A_b$; the same meaning applies to $K_B^{(a)}$ and $K_B^{(ab)}$ on the $B$ side. We stress that the intra-block pieces need not commute with $\Gamma_{\mathrm{can}}$; they simply generate deformations acting inside one canonical block , and these have already been controlled in paragraph~(1). Accordingly, we split the tangent space into intra-block sectors $W_a$, generated by $(K_A^{(a)},K_B^{(a)})$, and inter-block sectors $V_{ab}$, generated by $(K_A^{(ab)},K_B^{(ab)})$. Since the decomposition of $K_A$ and $K_B$ into block components is unique, the corresponding decomposition of tangent vectors is unique as well. Therefore the inter-block tangent space splits as
the direct sum $\bigoplus_{a=1}^{r}W_a\;\oplus\;\bigoplus_{a<b}V_{ab}$.

We now show that the Hessian $H_\alpha$ of $\mathcal{F}_\alpha$ at $\Gamma_{\mathrm{can}}$ is block diagonal with respect to this decomposition. For each $a$, let $S_a$ be the local sign flip acting as $-\Id$ on the four Majoranas of block $a$ and as $\Id$ on all other blocks. Since $\Gamma_{\mathrm{can}}$ is block diagonal, one has $S_a\Gamma_{\mathrm{can}}S_a^T=\Gamma_{\mathrm{can}}$. On the other hand, $\mathcal{F}_\alpha$ is invariant under $\Gamma\mapsto S_a\Gamma S_a^T$ for every $\Gamma$, because for every principal minor $X$ one has $\pf((S_a\Gamma S_a^T)_X)=\det((S_a)_X)\pf(\Gamma_X)$, and hence $\bigl|\pf((S_a\Gamma S_a^T)_X)\bigr|^{2\alpha}=\bigl|\pf(\Gamma_X)\bigr|^{2\alpha}$. Therefore, for every infinitesimal perturbation $X$, the identity $\mathcal{F}_\alpha(\Gamma_{\mathrm{can}}+\epsilon X)=\mathcal{F}_\alpha(\Gamma_{\mathrm{can}}+\epsilon\,S_aXS_a^T)$ holds. Expanding both sides in $\epsilon$, one concludes that the Hessian is invariant under conjugation by $S_a$, namely
$$
H_\alpha(X,Y)=H_\alpha(S_aXS_a^T,S_aYS_a^T)
$$
for all tangent vectors $X$ and $Y$.

The action of $S_a$ on the tangent sectors is by parity. If $u\in W_a$, then $S_a u S_a^T=u$, because an intra-block generator inside block $a$ picks up two minus signs; if $u\in W_b$ with $b\neq a$, then again $S_a u S_a^T=u$, since block $a$ is not involved. By contrast, if $u\in V_{ab}$, then $S_a u S_a^T=-u$, because a generator mixing blocks $a$ and $b$ picks up only one minus sign from block $a$; if instead $u\in V_{bc}$ with $a\notin\{b,c\}$, then $S_a u S_a^T=u$.

These parity rules imply the vanishing of all mixed Hessian terms between different sectors. First, if $u\in W_a$ and $v\in V_{bc}$ with $a\in\{b,c\}$, then $S_a u S_a^T=u$ while $S_a v S_a^T=-v$, so by invariance of the Hessian one gets $H_\alpha(u,v)=H_\alpha(u,-v)=-H_\alpha(u,v)$, hence $H_\alpha(u,v)=0$. Thus no intra-block/inter-block mixed term survives. Second, if $u\in V_{ab}$ and $v\in V_{cd}$ with $\{a,b\}\neq\{c,d\}$, there exists an index $\ell$ belonging to one pair but not to the other; then $S_\ell u S_\ell^T=-u$ and $S_\ell v S_\ell^T=v$, so again $H_\alpha(u,v)=0$. Finally, if $u\in W_a$ and $v\in W_b$ with $a\neq b$, the mixed Hessian term vanishes because, along purely intra-block deformations, $\mathcal{F}_\alpha(\Gamma_{\mathrm{can}})$ factorizes into independent single-block contributions, so the mixed derivative between two different block coordinates is zero at the stationary point.

We conclude that the Hessian is block diagonal with respect to the decomposition above, and therefore the second variation splits as
$$
\delta^2\mathcal{F}_\alpha=\sum_{a=1}^{r}\delta^2\mathcal{F}_\alpha^{(a)}+\sum_{a<b}\delta^2\mathcal{F}_\alpha^{(ab)}.
$$
The first sum contains the intra-block contributions already controlled in paragraph~(1), while the second contains the pairwise inter-block channels controlled in paragraph~(2). Since each of these terms is non-positive, the full Hessian is negative semidefinite. Moreover, the first variation vanishes because every infinitesimal direction is a sum of intra-block and pairwise inter-block directions, and each of these restricted problems is already extremized at $\Gamma_{\mathrm{can}}$. Hence $\Gamma_{\mathrm{can}}$ is a local maximum of $\mathcal{F}_\alpha$ in the Gaussian manifold.

\subsection{Thermodynamic limit from the Jacobi ensemble}
We start from the finite-size joint probability density of the variables
$x_1,\dots,x_m \in [0,1]$, where $x_k=\nu_k^2$, and $\nu_k$ are the singular values associated with the subsystem block. In the Haar-random pure Gaussian ensemble, these variables are governed by a Jacobi-type joint law of the form
\begin{equation}
P_{m,n}(x_1,\dots,x_m)
=
\frac{1}{Z_{m,n}}
\prod_{1\le i<j\le m}(x_i-x_j)^2
\prod_{k=1}^m x_k^{-1/2}(1-x_k)^{\,n-m},
\qquad x_k\in[0,1],
\end{equation}
where $Z_{m,n}$ is the normalization constant. The one-point density $\rho_{m,n}(x) = \int_{[0,1]^{m-1}} dx_2\cdots dx_m P(x,x_2,\dots,x_m)$ is known to be~\cite{bianchi_2021,jiang_2009}
\begin{equation}
\rho_{m,n}(x) = \frac{w_{m,n}(x)}{m}
\sum_{r=0}^{m-1}\frac{p_r(x)^2}{h_r},
\end{equation}
where $p_r(x) $ are the degree-$r$
Jacobi orthogonal polynomials with weight
$w_{m,n}(x) = x^{-1/2}(1-x)^{n-m-1/2}$
and $h_r$ are their norms. 

However, here we are interested in the thermodynamic regime  $L=m+n \to \infty$ in the scaling limit $\ell=\frac{m}{L} \in [0,1]$ fixed. The goal is to determine the limiting density of the particles $x_k$. To do so, one rewrites the probability density in exponential form $P_{m,n} \equiv \exp(-m^2 \mathcal{E}_{m,n})$. Taking the logarithm of $P_{m,n}$ gives
\begin{equation}
\log P_{m,n}(x_1,\dots,x_m)
=
-\log Z_{m,n}
+
2\sum_{1\le i<j\le m}\log|x_i-x_j|
-\frac12\sum_{k=1}^m \log x_k
+
(n-m)\sum_{k=1}^m \log(1-x_k),
\end{equation}
where the different contributions can be interpreted separately. The term
$
2\sum_{i<j}\log|x_i-x_j|
$
describes the mutual logarithmic repulsion between the particles, while
$
(n-m)\sum_k \log(1-x_k)
$
acts as an external confining potential which pushes the particles away from $x=1$. The term
$
-\frac12\sum_k \log x_k
$
only affects the behavior near $x=0$.

To extract the leading thermodynamic behavior, one must identify the correct scale of the exponent. The repulsion term contains $m(m-1)/2$ pairs, hence it is of order $m^2$. Moreover, $n-m = L-2m = L(1-2\ell)$,
and since $m=\ell L$, the confinement term is also of order $m^2$. 
By contrast, the term involving $\sum_k \log x_k$ contains only $m$ summands and is therefore of order $m$, hence subleading with respect to the $m^2$ contributions. This is why one divides by $m^2$: after this rescaling, the leading terms admit a finite nontrivial limit. At this point one introduces the empirical measure
\begin{equation}
\rho_m(x)=\frac1m\sum_{k=1}^m \delta(x-x_k) \to \rho(x)
\end{equation}
which represents the random density of particle positions,
while $\rho$ denotes the deterministic limiting density as $m\to\infty$.
In terms of $\rho$, sums over particles may be rewritten as integrals (discarding sub-leading correlations in the two-point measure):
\begin{equation}
\frac1m\sum_{k=1}^m g(x_k) \to \int g(x)\rho(x)\,dx,
\quad
\frac{1}{m^2}\sum_{i\neq j} h(x_i,x_j)
\;\longrightarrow\;
\iint h(x,y)\rho(x)\rho(y)\,dx\,dy,
\end{equation}

At finite $m$, the one-point marginal density (introduced in the main text) is defined by
\begin{equation}
\rho_{m,n}(x)
=
\frac{1}{m}\,\mathbb{E}\left[\sum_{k=1}^m \delta(x-x_k)\right] = 
\mathbb{E}\left[ \rho_{m}(x)\right].
\end{equation}
This is the averaged particle density, such that the average that we want to compute is indeed
$\mathbb E[\MNL]/L =  -\ell \,\mathbb{E} \left[\sum_{k=1}^{m}\log(1-x_k+x_k^2)/m\right] = -\ell\int dx \log(1-x+x^2) \rho_{m,n}(x)$.
On the other hand, $\rho_m$ is the empirical density of a single configuration. In the thermodynamic limit, the empirical density concentrates and becomes deterministic, so that $\rho_{m,n}(x)\to \rho(x)$ as well.
Thus, in the large-$L$ limit, the equilibrium density $\rho$ entering the energy functional also coincides with the limiting one-point marginal density.

Replacing the leading sums by their continuum counterparts, one obtains the Coulomb-gas energy functional
\begin{equation}
\frac{\mathcal{E}_{m,n}}{m^2} \to \mathcal{E}[\rho]
=
-\iint \log|x-y|\,\rho(x)\rho(y)\,dx\,dy
-\alpha \int \log(1-x)\,\rho(x)\,dx,
\end{equation}
with $\alpha = (n-m)/m = (1-2\ell)/\ell$.
The first term represents the logarithmic repulsion between particles, and the second term is the contribution of the external potential
$V(x)=-\alpha\log(1-x)$. The probability of observing a macroscopic density $\rho$ behaves, at leading exponential order, like
\begin{equation}
\mathrm{P}[\rho]\asymp e^{-m^2\mathcal E[\rho]}.
\end{equation}
Therefore the thermodynamic density is obtained by minimizing $\mathcal E[\rho]$ under the constraints
$\rho(x)\ge 0$ and $\int \rho(x)\,dx = 1$.
The minimizer satisfies the Euler--Lagrange equation
\begin{equation}
-2\int \log|x-y|\,\rho(y)\,dy - \alpha\log(1-x)=C,
\end{equation}
for $x$ in the support of $\rho$, where $C$ is a constant Lagrange multiplier enforcing normalization. Differentiating with respect to $x$ yields the singular-integral equation
\begin{equation}
2\,\mathrm{PV}\!\int \frac{\rho(y)}{x-y}\,dy
=
\frac{\alpha}{1-x},
\end{equation}
valid on the support of $\rho$. At leading order the support is sought in the form $\mathrm{supp}(\rho)=[0,\tilde x]$, with $\tilde x<1$. This is natural because the potential is logarithmically diverging at $x=1$. To solve the singular integral equation, one introduces the resolvent
\begin{equation}
G(z)=\int_0^{\tilde x} \frac{\rho(y)}{z-y}\,dy,
\qquad
z\in\mathbb C\setminus[0,\tilde x].
\end{equation}
This function is analytic outside the support, and its discontinuity across the cut determines the density:
\begin{equation}
\rho(x)=\frac{1}{2\pi i}\bigl(G_-(x)-G_+(x)\bigr),
\quad 
G_+(x)+G_-(x) = 2\,\mathrm{PV}\!\int \frac{\rho(y)}{x-y}\,dy,
\quad 
G_{\pm}(x) = \lim_{\epsilon\to 0} G(x\pm i \epsilon).
\end{equation}
Moreover, the normalization of $\rho$ implies
$G(z)\sim z^{-1}$ for $z\to\infty$. A one-cut ansatz consistent with the support $[0,\tilde x]$, and satisfying the condition $G_+(x)+G_-(x) = \alpha/(1-x)$ is
\begin{equation}
G(z)
=
\frac{\alpha}{2(1-z)}
-
\frac{A}{1-z}\sqrt{\frac{z-\tilde x}{z}},
\end{equation}
where the square root is chosen so that $\sqrt{(z-\tilde x)/z}\to 1$, for $z\to\infty$. The constants $A$ and $\tilde x$ are fixed by two conditions. First, $G(z)$ must be regular at $z=1$. Hence the apparent pole must cancel,
i.e. $\lim_{z\to 1} (1-z) G(z) = 0$ implying 
\begin{equation}\label{eq:condition1}
\frac{\alpha}{2}-A\sqrt{1-\tilde x}=0.
\end{equation}
Second, the large-$z$ behavior must satisfy $G(z)\sim z^{-1}$. 
Expanding one finds $G(z)=\left(A-\frac{\alpha}{2}\right)\frac1z+O(z^{-2})$, hence
\begin{equation}\label{eq:condition2}
A-\frac{\alpha}{2}=1.
\end{equation}
Solving Eq.s (\ref{eq:condition1},\ref{eq:condition2}) together with the definition $\alpha = (1-2\ell)/\ell$ one gets
\begin{equation}
A=\frac{1}{2\ell}, \qquad \tilde x=4\ell(1-\ell).
\end{equation}
The density is then obtained from the discontinuity of the resolvent. For $x\in(0,\tilde x)$,
\begin{equation}
G_-(x)-G_+(x)
=
\frac{2iA}{1-x}\sqrt{\frac{\tilde x-x}{x}},
\end{equation}
therefore
\begin{equation}\label{eq:single_particle}
\rho(x)
=
\frac{1}{2\pi\ell}\,
\frac{\sqrt{(\tilde x-x)/x}}{1-x}\,
\mathbf{1}_{(0,\,\tilde x)}(x),
\end{equation}
which is the thermodynamic density of the variables $x_k=\nu_k^2$. Equivalently, it is the large-$L$ limit of the one-point marginal density of the Jacobi ensemble. Plugging this result into the definition of the nonlocal magic, for $\alpha=2$ one obtains 
\begin{eqnarray}
\mathcal{J}(\ell) &\equiv & \lim_{L\to\infty} \frac{1}{L} \mathbb{E}_{\Gamma}[M_{2,>}^\text{NL}(\psi_\Gamma)] 
 =  -\frac{1}{2\pi}\int_{0}^{\tilde x} dx
\log(1-x+x^2) \frac{\sqrt{(\tilde x-x)/x}}{1-x} \\
& = & \Re 
\left\{
2\sqrt{1-\tilde x}\log\left[ \frac{\sqrt{1-\tilde x}+\sqrt{1-\tilde x e^{i\pi/3}}}{\sqrt{1-\tilde x}+1}\right]
-
2\log\left[ \frac{1+\sqrt{1-\tilde x e^{i\pi/3}}}{2}\right]
\right\}.
\end{eqnarray}

\end{document}